\DeclareMathOperator\Arg{Arg}
\newtheorem{proposition}{\bf Proposition}
\begin{document}

\title{Reflections in the Sky: Millimeter Wave Communication with UAV-Carried Intelligent Reflectors }
	\vspace{-0.2cm}	
\author{\IEEEauthorblockN{Qianqian Zhang$^1$, Walid Saad$^1$, and Mehdi Bennis$^2$ } 	
		%\vspace{-0.2cm}	
	\IEEEauthorblockA{\small
		$^1$Bradley Department of Electrical and Computer Engineering, Virginia Tech, VA, USA,
		Emails: \url{{qqz93,walids}@vt.edu}. \\
		$^2$Center for Wireless Communications, University of Oulu, Finland, Email: \url{mehdi.bennis@oulu.fi}.   
		%\thanks{This research was supported by the U.S. National Science Foundation under Grant  CNS-1814477, Academy of Finland, Grant 294128, in part by the 6Genesis Flagship under Grant 318927, in part by the Kvantum Institute Strategic Project (SAFARI), in part by the Academy of Finland thorough the MISSION Project under Grant 319759.}
		\vspace{-0.5cm}		
}
}
\maketitle

\begin{abstract}
	In this paper, a novel approach that uses an unmanned aerial vehicle (UAV)-carried intelligent reflector (IR) is proposed to enhance the performance of millimeter wave (mmW) networks.  
	In particular, the UAV-IR is used to intelligently reflect mmW beamforming signals from a base station  towards a mobile outdoor user, while harvesting energy  from mmW signals to power the IR.  
	To maintain a line-of-sight (LOS) channel, a reinforcement learning (RL) approach, based on Q-learning and neural networks, is proposed to model the propagation environment, such that the location and reflection coefficient of the UAV-IR can be optimized to maximize the downlink transmission capacity.  
	Simulation results show a significant advantage for using a UAV-IR over a static IR, in terms of the average data rate and the achievable downlink LOS probability.  
	The results also show that  the RL-based deployment of the UAV-IR further improves the network performance, relative to a scheme without learning. 
\end{abstract}

\IEEEpeerreviewmaketitle

\section{Introduction}
Next-generation cellular systems will inevitably rely on high-frequency millimeter wave (mmW) communications in order to meet the growing need for wireless capacity \cite{saad2019vision}. 
However, communicating at mmW frequencies faces many challenges. 
One prominent such challenge is the high susceptibility of mmW links to  blockage caused by common objects, such as trees and human bodies, which can seriously attenuate mmW signals \cite{bennis2018ultrareliable}.  
Enabling reliable mmW links under blockage is therefore a major barrier hindering the  deployment of mmW bands into wide-scale commercial uses.

To overcome these drawbacks of mmW, signal reflectors have been recently proposed to bypass obstacles and prolong the communication range \cite{peng2015effective}. 
In particular, by using reflectors, a non-line-of-sight (NLOS) mmW link can be compensated by creating multiple, connected line-of-sight (LOS) links, thus significantly reducing the mmW channel attenuation.  
It has also been shown that the use of reflectors is more appropriate for mmW networks than conventional relay stations (RSs) \cite{tan2018enabling}.   
Different from traditional RSs that receive, amplify (or decode), and forward the mmW signal, a reflector only reflects the incident signal towards the receiver, by inducing a certain phase shift.  
Therefore, reflective relaying incurs no additional receiving noise, and is energy efficient.  
Due to the nature of reflective surfaces, the connected LOS links that pass through one or more reflectors can share the same frequency band, thus improving the spectrum efficiency.  
In particular, an intelligent reflector (IR) that consists of a large number of low-cost passive components can realize beamforming with little energy cost \cite{boyer2014invited}. 
Each IR component reflects the incident signals, while harvesting the radio frequency (RF) energy from the unreflected fraction of signals to power itself. 
By collaboratively adjusting the phase shifts, an IR can focus the reflected signal into a sharp beam, 
hence maximizing beamforming performance gains.  

Several recent works have studied the use of IRs to enhance the performance of cellular networks such as in  \cite{peng2015effective} and \cite{hong2018mmwave,wu2018intelligent,barros2017integrated,huang2018energy}.  
In \cite{peng2015effective}, the design of a passive reflector and the estimation of the reflection gain are presented for urban mmW communications. 
In \cite{hong2018mmwave}, simulations and experiments are carried out in indoor environments to measure the transmission of passive reflectors over mmW.  
The authors in \cite{wu2018intelligent} jointly optimized the transmit beamforming from 
an access point and the reflective beamforming in IR to maximize the received signal power at the user equipment (UE).  
The authors in \cite{barros2017integrated} conducted propagation modeling of terahertz signals using reflectors. 
Meanwhile, the work in \cite{huang2018energy} studied the energy efficiency of reflector-assisted downlink communications. 
However, prior works on mmW reflectors in \cite{peng2015effective} and \cite{hong2018mmwave} focus mainly on experimental measurements, while the IR-related works in \cite {wu2018intelligent,barros2017integrated,huang2018energy} study cellular communications over non-mmW spectrum.   
Furthermore, all of the prior works in \cite{peng2015effective} and \cite{hong2018mmwave, wu2018intelligent, barros2017integrated, huang2018energy} rely on passive reflectors placed at a fixed location, which cannot cope with the dynamic changes of mmW channels.  
For example, a simple body movement of the UE can cause significant mmW blockage  and render a static IR ineffective.    

Due to the blockage-prone nature of mmW signals, mobile reflectors are more appropriate to enhance mmW communications than stationary IRs.  
For instance, one can employ a UAV-carried IR (UAV-IR) that can adjust the location of the reflector constantly, according to the changes in environment thus maintaining persistent LOS links with both transmitter and receiver.   
UAV-assisted communications have attracted significant recent attention \cite{mozaffari2018tutorial,chen2017caching,mozaffari2017mobile}, however, no prior work has studied the use of UAV-IRs.  
In particular, a UAV-IR differs from a UAV-aided RS, due to its simpler antenna structure and smaller power supply, which make it more suitable for a dense communication scenario. 
Being embedded onto a UAV, an IR can improve the reliability of mmW transmissions  by  optimizing its location intelligently. 
However, such optimization  requires a precise channel state information (CSI) of the IR-UE link.   
Considering the possible motion of UAVs and UEs, as well as the blockage effect of the human body on mmW signals, the real-time value of CSI is difficult to obtain.   
Thus, to enable an efficient deployment of a UAV-IR for mmW transmissions,  
the challenge of CSI estimation must be properly addressed.

The main contribution of this paper is, thus, a novel framework for effective deployment of a UAV-IR to assist mmW downlink transmission in a dynamic environment with  moving UEs.  
To maintain a LOS channel, a reinforcement learning (RL) approach \cite{sutton2018reinforcement}, based on Q-learning and neural networks, is proposed to model the propagation environment, such that the location and reflection coefficient of the UAV-IR can be optimized to maximize the downlink transmission capacity.    
Meanwhile, we propose the use of RF energy harvesting for self powering the IR. 
Simulation results show the effectiveness of the proposed UAV-IR-based approach compared to a static IR. 
To the best of our knowledge, this is the first paper that proposes a learning-based deployment  of UAV-IRs for mmW communications with RF energy harvesting. 

The rest of this paper is organized as follows. Section \ref{sysModel_proFormulation} presents the system model and problem formulation. The framework of the optimal deployment of UAV-IR is proposed in Section \ref{solution}. Simulation results are presented in Section \ref{simulation}, while conclusions are drawn in Section \ref{conclusion}. 

\section{System Model and Problem Formulation}\label{sysModel_proFormulation}

\begin{figure}[!t]
	\begin{center}
		\vspace{-0.6cm}
		\includegraphics[width=8.5cm]{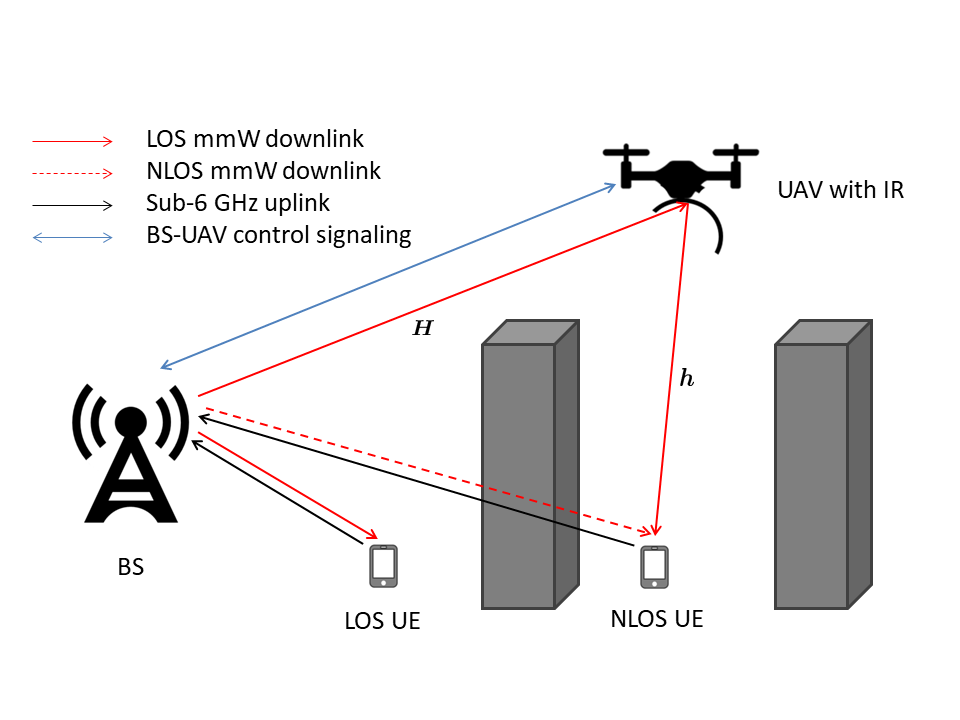}
		\vspace{-0.5cm}
		\caption{\label{systemmodel}\small If the downlink channel over mmW between the BS and a UE is blocked, a UAV-IR will be deployed to serve the NLOS UE. }
	\end{center}\vspace{-0.5cm}
\end{figure}

Consider a cellular base station (BS) serving a group of outdoor UEs via mmW frequencies over the downlink.  
To compensate for the fast attenuation of mmW signals, the BS is equipped with directional antenna arrays for beamforming. 
In a LOS case, the downlink communication can be reliable and efficient.  
However, when blockage happens, the propagation attenuation of the mmW channel will substantially increase.  
To improve the received power, a UAV-IR will be deployed to assist in serving NLOS UEs.  

As shown in Fig. \ref{systemmodel}, an IR replaces the direct NLOS link with two connected LOS links, by reflecting the mmW signals from the BS towards the UE.    
By equipping the IR on a UAV, the UAV can adjust the IR's position to maintain LOS links with both the BS and the UE.   
The BS assigns the UAV with a specific channel for control signaling.  
Therefore, the UE's feedback can be obtained via the UE-BS and BS-UAV links to improve the parameter setting of IR reflections.  
The UAV has a limited on-board energy, which restricts its service time. 
In order to avoid drawing energy from the UAV, the IR can harvest energy from the unreflected fraction of mmW beamforming signals, and convert it into the electrical energy via rectifiers to power itself.     

The optimal beamforming at the BS side for IR-assisted communications has been investigated in \cite{wu2018intelligent}  and \cite{huang2018energy} without a UAV.  
Here, we focus on the optimal deployment of the UAV-IR to maximize the downlink capacity, 
while harvesting RF energy to power the reflector. 
To address this problem, we will first provide the model of IR-assisted downlink transmissions, propose the net energy consumption model for the UAV-IR, and then, present the problem formulation.

\subsection{Channel Model and Communications Capacity} \label{channelModel}

Consider a multiple-input-single-output wireless link from the BS, reflected by the UAV-IR, to a single-antenna UE. 
We assume that the BS is equipped with $M \in \mathbb{N}^{+}$ antennas, and the UAV-IR consists of $N \in \mathbb{N}^{+}$ reflective components. 
Therefore, the channels of the BS-IR link and the IR-UE link can be denoted as $\boldsymbol{H} \in \mathbb{C}^{N\times M}$ and $\boldsymbol{h} \in \mathbb{C}^{1\times N}$, respectively.  
We assume that the CSI $\boldsymbol{H}$ of the BS-IR link is fully known to both the BS and the UAV-IR. 
However, due to the susceptibility of mmW to blockage, the CSI $\boldsymbol{h}$ of the IR-UE link will be considered as a variable.   

At the UAV-IR, the incident beamforming signal from the BS can be defined by an $N \times 1$ vector as %follows,
%\begin{align}
$	\boldsymbol{r} = \boldsymbol{H} \boldsymbol{w}s$, 
%\end{align}
where  $s$ represents the encoded symbol with zero mean and unit variance, and $\boldsymbol{w} \in \mathbb{C}^{M\times 1}$ is the transmission vector of linear beamforming  at the BS. 
After receiving $\boldsymbol{r}$, each component in IR combines the multi-path signals, and reflects the combined signal by inducing a phase shift.  
Structurally, an IR component is composed by an adjustable antenna and a rectifier. By varying the impedance of the antennas, the mismatch between the antenna structure with the carrier wave can reflect back a portion of the incident signal, while the remaining part can be harvested by the rectifier to assist powering the reflector. 
Let  $\mathcal{N}$ be the index set of reflective components in IR. 
For each  $n \in \mathcal{N}$, we denote the phase shift of reflection as $\theta_n \in [0,2\pi] $, and the amplitude reflection coefficient by $a \in [0,1)$.  
Then, the reflection coefficient of IR can be denoted by ${\Theta} = \text{diag} (a  e^{j\theta_1},\cdots, a e^{j\theta_N} )$, where  $\text{diag}(\boldsymbol{\cdot})$  is a diagonal matrix with each diagonal element  as $a  e^{j\theta_n}$.  

Let $\boldsymbol{x}\in \mathbb{R}^3$ be the location of the UAV-IR,  $\boldsymbol{y} \in \mathbb{R}^3$  be the UE's position, and $\omega \in [0, 2\pi)$ be the angle of the UE's position relative to the body of the human user.    
Due to the possible movement and motion of the UE, the values of $\boldsymbol{y}$ and $\omega$ may keep changing. 
For tractability, we assume that the real-time values of $\boldsymbol{y}$ and $\omega$  can be updated to the UAV, via a UE-BS uplink (control) channel and the BS-UAV control link. 
Then, the received signal at the UE will be  $\boldsymbol{h}(\boldsymbol{x},\boldsymbol{y}, \omega) {\Theta}  \boldsymbol{r} +z$,  
where $z$ is the receiver noise with an average noise power spectral density $n_0$. 
The signal-to-noise ratio (SNR) of the downlink transmission from the BS, reflected by the IR, to the served UE can be given by:  
\begin{align}\label{snr}
	\eta (\boldsymbol{x},\boldsymbol{y}, \omega, {\Theta}) =  \frac{|\boldsymbol{h}(\boldsymbol{x} ,\boldsymbol{y}, \omega) {\Theta}    \boldsymbol{r}  |^2 }{ bn_0},   
\end{align}   
where $b$ is the downlink bandwidth.  
Here, we define a minimum threshold  $\tau$ for the received SNR. If $\eta < \tau$, the UE cannot successfully decode any received signal.  
Consequently,  the capacity of the IR-assisted downlink communication can be expressed as 
\begin{align}\label{equCapacity}
	c (\boldsymbol{x},\boldsymbol{y}, \omega, {\Theta}) =  b \log_2 \left( 1+  \eta(\boldsymbol{x},\boldsymbol{y}, \omega, {\Theta}) \right).  
\end{align}

\subsection{Energy Consumption and Harvesting of UAV-IR} \label{subsecEnergyEff} 
To maintain a LOS link with the UE, the UAV must frequently adjust its location, according to the UE movement. 
Hence, the power consumption of a UAV-IR  mainly comes from the UAV's hovering and mobility, while  the adjustment of the IR's impedance loads for mmW reflections consumes  a small amount of energy.    
For tractability, we assume that the powers for hovering, mobility, and reflection are all constant values,  denoted by $p_h$, $p_m$ and $p_r$, respectively, where $p_m>p_h > p_r>0$.   
Moreover, we assume that the transmission only happens during the hovering state, 
since a moving IR will complicate the beamforming process at both the BS and the IR side.   
Thus, the BS will not transmit any downlink signal, until the location of the UAV-IR is fixed.   
Let  $v_r$ be the UAV's speed, and $\mathbbm{1}_{v_r=0}$ be an indicator function that equals to $1$ only when the UAV is hovering and its speed  $v_r$ equals zero.  

In the hovering state, while reflecting mmW signals, an IR can harvest the RF energy from the unreflected portion of the incident beamforming signal $\boldsymbol{r}$,  and  transfer it into electrical power via the rectifier.  
Let $\kappa \in (0,1)$ be the average energy converting efficiency.  
Then, the power harvested by IR from the beamforming signal $\boldsymbol{r}$ can be expressed by 
\begin{align}
	p_e( {\Theta}) = \kappa \| \left( \boldsymbol{I} - {\Theta}  \right) \boldsymbol{r} \|^2, 
\end{align}
where $\boldsymbol{I} \in \mathbb{R}^{N\times N}$ is the identity matrix.  
Considering the harvested power $p_e(\Theta)$ is much smaller compared with $p_h$ and $p_m$, it is mainly used to satisfy the reflector power $p_r$, such that the IR can self-power without drawing energy from the UAV.  
Therefore, the net power consumption of the UAV for providing a reflection service to a UE is 
\begin{align}
	p({\Theta},v_r) = \mathbbm{1}_{v_r=0} \cdot ( p_h + p_r- p_e({\Theta})) + (1-\mathbbm{1}_{v_r=0}) \cdot  p_m.  	 
\end{align}

\subsection{Problem Formulation} \label{problemFormulation}

We consider a dynamic process for the UAV-IR's deployment, in which the UAV-IR tracks the movement of the served UE, and provides downlink mmW transmissions via the IR. 
The coherence time of a wireless channel is given by: $\Delta t = \lambda_f/ v_{\text{e}} \frac{}{}$, where $\lambda_f$ is the carrier wavelength,  
and $v_{\text{e}}$ is the UE's speed.  
The CSI $\boldsymbol{h}$ within one coherence time $\Delta t$ is considered to be constant. 
Although the higher operating frequency at the mmW bands leads to a smaller wavelength $\lambda_f$, considering a lower velocity $v_{\text{e}}$ of the outdoor UE in our model, the coherence time $\Delta t$ of the mmW channel is similar to the current macrocell service using multiple antennas \cite{swindlehurst2014millimeter}.     
Furthermore, the measurement results in \cite{va2015basic} have shown a slower change of the mmW channel as the beamwidth of directional transmissions narrows.  
Therefore, it is possible to group multiple coherence times to form a longer interval $\Delta T = k\Delta t$, where $k \in \mathbb{N}^{+}$, and the UAV-IR deployment can be optimized for each time slot $\Delta T$, instead of $\Delta t$.

We divide the UAV-IR deployment into a sequence of two alternating stages: communication stage and mobility stage.  
In the communication stage, the UAV-IR is hovering and providing the downlink service towards the moving UE. 
However, once the UE receives an SNR $\eta$ lower than the threshold $\tau$,  blockage occurs in the downlink, and then, the UAV-IR will move to a new place and rebuild a LOS link for downlink transmissions.  
Note that, the length of each communication stage depends on the real-time state of the IR-UE link, while the duration of a mobility stage is determined by the travel distance and the speed of the UAV.  Thus, both stages can last for a multiple of $\Delta T$. 

Our goal is to jointly  optimize the location and reflection parameters of the UAV-IR, such that, before the onboard energy $E$ is exhausted, the total downlink transmissions that the UAV-IR provides to the  UE can be maximized, i.e.:   
\begin{subequations}\label{equsOpt}
	\begin{align}
		\max_{\boldsymbol{x},{\Theta}} \quad &  \sum_{t=1}^{T}  \mathbbm{1}_{v_r(t)=0} \cdot c\left(\boldsymbol{x}(t),\boldsymbol{y}(t), \omega(t),{\Theta}(t)\right) \cdot \Delta T  \label{equOpt}\\
		\textrm{s.t.} \quad  
		& v_r(t) = \frac{\|\boldsymbol{x}(t-1)-\boldsymbol{x}(t)\|}{\Delta T},\\
		& \sum_{t=1}^{T} p({\Theta}(t),v_r(t)) \cdot \Delta T \le E - \epsilon	, \\
		& 0 \le \theta_n(t) \le 2\pi,  \quad \forall n \in \mathcal{N}, 
	\end{align}
\end{subequations}
where  $T \in \mathbb{N}^{+}$ is a random integer which represents the end of the UAV's service,  and $\epsilon \in [0,p_m\Delta T)$ is the residual energy in the UAV at the end of its service.  
For notational convenience, hereinafter, we will only include the time index for each variable, but omit all the other parameters.  
 
For each slot $\Delta T$, the optimization of the reflection coefficient ${\Theta}(t)$ requires a precise CSI $\boldsymbol{h} (t)$ of the IR-UE link. 
Due to the possible movement of UE and UAV-IR and the blockage effect caused by the human body on mmW links, the CSI of the IR-UE link may keep changing, and thus, the real-time measurement is needed to determine the value of $\boldsymbol{h}(t)$.  
However, it is impractical to move the UAV to all possible locations to measure the CSI and jointly optimize the location and the reflection coefficient, because such sweeping search will consume  significant power and time.  
To address the challenge of the CSI estimation, a learning-based approach can be applied to enable an efficient deployment of the UAV-IR.

\section{Optimal Deployment of UAV-IR} \label{solution}

In this section, an RL framework, based on Q-learning and  neural networks, is proposed for the UAV-IR to find its optimal location, such that the CSI for the downlink mmW channels can be efficiently measured, and the reflective coefficient will be optimized accordingly to maximize the downlink transmission capacity.  
The model needs no prior knowledge of the dynamic environment, instead, it learns the property of the environment during the service process of the UAV-IR, based on the measurement and feedback during each communication stage.

\subsection{Stationary UE and UAV-IR}
Prior to learning, we first consider the simple case of a stationary UE with constant  $(\boldsymbol{y},\omega)$.    
In this case, once the UAV-IR fixes its location and measures the CSI of the IR-UE link, $\boldsymbol{h}$ is considered to be constant.    
Let $\Arg(\cdot)$ denote the argument of a complex number.
Then, we find the optimal reflection coefficient  ${\Theta}^{*} = \arg \max_{{\Theta}} c({\Theta})$ that maximizes the downlink capacity, given that the CSI $\boldsymbol{h}$ between the UAV-IR and the UE is known, as follows.    
\begin{proposition}\label{prop1} 
	For a known CSI $\boldsymbol{h}$, the optimal reflection parameter that maximizes the downlink transmission in (\ref{equOpt}) during each time slot is ${\Theta}^{*} =  \text{diag} (a  e^{j\theta_1^{*}},\cdots, a e^{j\theta_N^{*}} )$, where ${\theta}_n^{*} = -\Arg(h_n r_n)$, $\forall n \in \mathcal{N}$. 
\end{proposition}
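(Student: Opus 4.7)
The plan is to reduce the problem to a one-step maximization of the squared magnitude $|\boldsymbol{h}\Theta\boldsymbol{r}|^{2}$, and then recognize this as a phase-alignment problem that is solved by the triangle inequality. The first move is to observe that the capacity $c = b\log_{2}(1+\eta)$ is strictly increasing in $\eta$, and $\eta$ is proportional to $|\boldsymbol{h}\Theta\boldsymbol{r}|^{2}$ with the denominator $bn_{0}$ independent of $\Theta$. Hence maximizing $c$ over $\Theta$ is equivalent to maximizing $|\boldsymbol{h}\Theta\boldsymbol{r}|^{2}$ over the admissible reflection matrices.

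Next I would write out the bilinear form explicitly. Since $\Theta = \mathrm{diag}(a e^{j\theta_{1}},\ldots,a e^{j\theta_{N}})$, and $\boldsymbol{h} = (h_{1},\ldots,h_{N})$, $\boldsymbol{r} = (r_{1},\ldots,r_{N})^{\mathsf{T}}$, a direct expansion gives
\begin{equation*}
\boldsymbol{h}\,\Theta\,\boldsymbol{r} \;=\; a\sum_{n=1}^{N} h_{n}\,r_{n}\,e^{j\theta_{n}}.
\end{equation*}
Its modulus is bounded by the triangle inequality:
\begin{equation*}
\left|a\sum_{n=1}^{N} h_{n} r_{n} e^{j\theta_{n}}\right|
\;\le\; a\sum_{n=1}^{N}\bigl|h_{n} r_{n} e^{j\theta_{n}}\bigr|
\;=\; a\sum_{n=1}^{N} |h_{n} r_{n}|,
\end{equation*}
where the last equality uses $|e^{j\theta_{n}}|=1$. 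Crucially, the right-hand side does not depend on the phases $\theta_{n}$, so it is a valid upper bound for every admissible $\Theta$.

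The final step is to verify that this upper bound is achieved by the stated choice. Equality in the triangle inequality holds if and only if all nonzero summands $h_{n}r_{n}e^{j\theta_{n}}$ share a common argument. The argument of the $n$th summand is $\Arg(h_{n}r_{n}) + \theta_{n}$, so setting $\theta_{n}^{*} = -\Arg(h_{n}r_{n})$ forces every summand to be a nonnegative real number, saturating the bound. Since $\theta_{n}^{*}$ is unconstrained in $[0,2\pi)$ (interpreting the argument modulo $2\pi$), this choice is feasible. Plugging $\Theta^{*} = \mathrm{diag}(a e^{j\theta_{1}^{*}},\ldots,a e^{j\theta_{N}^{*}})$ back into the capacity expression then yields the claimed optimum.

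There is no real obstacle here; the argument is a textbook application of the triangle inequality, and the only thing to be a little careful about is that the amplitude $a$ is a fixed parameter (not a design variable), so the optimization really is purely over the phase vector $(\theta_{1},\ldots,\theta_{N})$, and the optimizer above is unique only up to a common global phase shift, which does not affect $|\boldsymbol{h}\Theta\boldsymbol{r}|^{2}$.
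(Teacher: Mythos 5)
Your proof is correct and follows essentially the same route as the paper: reduce to maximizing $|\boldsymbol{h}\Theta\boldsymbol{r}|^2$, apply the triangle inequality, and choose $\theta_n^{*}=-\Arg(h_n r_n)$ to align all summands' arguments (the paper normalizes the common argument to zero, which is exactly your choice). Your write-up is in fact slightly cleaner, since it states the equality condition explicitly where the paper's corresponding line contains a typo.
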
 
\begin{proof} 
	According to (\ref{snr}) and (\ref{equCapacity}), it is easy to check that $\max_{{\Theta}} c({\Theta}) \Longleftrightarrow \max_{{\Theta}} 	| \boldsymbol{h}{\Theta}\boldsymbol{r} |^2 $.  Meanwhile,  
	\begin{equation}\label{equ7}
		|\boldsymbol{h}{\Theta}\boldsymbol{r} | = |\sum_{n=1}^{N} a h_n r_n e^{j\theta_n} | \le \sum_{n=1}^{N} a |h_n r_n e^{j\theta_n}|. 
	\end{equation}
	The condition for equality in (\ref{equ7}) is that each complex component $h_n r_n e^{j\theta_n}$ has the same argument. 
	Without loss of generality, let $h_n r_n e^{j\theta_n} = 0, \forall n \in \mathcal{N}$.  
	Thus, we have ${\Theta}^{*} =  \text{diag} (a  e^{j\theta_1^{*}},\cdots, a e^{j\theta_N^{*}} )$, where $\theta_n^{*} = - \Arg(h_n r_n)$, $\forall n \in \mathcal{N}$. 
\end{proof} 
Therefore, given the CSI $\boldsymbol{h}$ between the UAV-IR and the served UE, the optimal reflection coefficient $\Theta^{*}$ can be uniquely determined. 
Hereinafter, $\Theta$ will be considered as its optimal value once $\boldsymbol{h}$ is available.

\subsection{Mobile UE and UAV-IR }

In the scenario of a moving UE, the CSI $\boldsymbol{h}$ of the IR-UE link will vary constantly. 
When downlink blockage happens,  
the UAV-IR must move to a new location and rebuild a LOS channel to serve the UE.   
In order to determine the optimal position for the UAV-IR of its next communication stage,  an RL framework, based on Q-learning and  neural networks, is proposed to learn and model the dynamic communication environment. 

In each communication stage, the UAV-IR will acquire the UE's location, and measure the CSI of the IR-UE link. 
After transmissions with an optimal reflection coefficient, the UAV-IR will get a feedback of the current downlink capacity, which will be used for the environment modeling. 
Once blockage occurs in the downlink, the optimal location, where the UAV-IR will be deployed during the next communication stage, can be determined, based on the current environment model.  
In the framework of  RL, the CSI $\boldsymbol{h}(t)$ of the IR-UE channel is called the \emph{environment}, the UAV is the \emph{agent} who can take \emph{action} $\boldsymbol{x}(t)$ that changes the environment state from one state to another, 
and a \emph{reward} is defined as the downlink data amount during the current time slot, as
\begin{equation}
	r(t) = \mathbbm{1}_{v_r(t)=0} c(t) \Delta T.
\end{equation}

However, since the UE is mobile, the environment can vary gradually, even without any action from the UAV. This environment is called \emph{nonstationary}.  
In order to understand the nonstationary property of $\boldsymbol{h}$ and identify the change caused by the human user,  
the UE's movement pattern must be properly modeled. 
Given a movement history $\boldsymbol{Y}_{t}=\{  (\boldsymbol{y}(0),\omega(0)),\cdots,(\boldsymbol{y}(t),\omega(t))\} $ during the past time slots, the probability that the UE will arrive at $(\boldsymbol{y},\omega)$ in the next $\Delta T$ can be formulated by  
$f (\boldsymbol{y},\omega|\boldsymbol{Y}_t) = N(\boldsymbol{\mu},\boldsymbol{\Lambda}|\boldsymbol{Y}_t)$,  
where $N(\cdot)$ denotes a Gaussian model with the mean $\boldsymbol{\mu} = \boldsymbol{y}(t)+ \boldsymbol{v}_{e}(t)\Delta T $ and the variance $\boldsymbol{\Lambda} = \text{Var}(\boldsymbol{Y}_t)$. 
Based on $f(\cdot|\boldsymbol{Y}_t)$, a Markov decision process (MDP) can be formulated to predict the  future movement of the served UE.  

In order to capture the action of the UAV-IR, a \emph{policy} $\pi_f( \boldsymbol{x} | \boldsymbol{h}, \boldsymbol{y},\omega)$ is introduced to represent the probability that the UAV moves towards $\boldsymbol{x}$,  given the IR-UE CSI $\boldsymbol{h}$, the location $(\boldsymbol{y},\omega)$ of the UE, and the prediction model $f$.    
The main objective of this learning approach is to determine the best policy, such that UAV can be deployed onto the optimal location $\boldsymbol{x}$ that maximizes the downlink transmission over a long term.  
In order to quantify the potential of each location $\boldsymbol{x}$ to provide a reliable downlink service, we define the value function for time $t$  as  
\begin{equation}\label{Qfunction}
	\begin{aligned}
		Q_t(\boldsymbol{h}(t),&\boldsymbol{x}(t+K), \boldsymbol{y}(t), \omega(t)) =    \\  
		&\mathbb{E} \left( \sum_{i=t+K}^{T} \gamma^{i-t-K} r(i) ~|~   \pi_f(\boldsymbol{x}|\boldsymbol{h},\boldsymbol{y},\omega) \right), 
	\end{aligned}
\end{equation}
which is the expected cumulative reward that the UAV-IR can achieve by serving the UE from location $\boldsymbol{x}(t+K)$ during the next communication stage, given a current CSI   $\boldsymbol{h}(t)$ and a current  UE location $(\boldsymbol{y}(t),\omega(t))$.  
The parameter $K$ is the duration of the mobility stage, such that $\| \boldsymbol{x}(t) - \boldsymbol{x}(t+K)\| = v_r^{\text{max}} \cdot K \cdot \Delta T $.     
$\gamma \in [0,1]$ is the discount ratio of a future reward in the current estimation, and its value  depends on the prediction accuracy of $f$ to forecast the future movement of the served UE.  
Under a perfect prediction, $\gamma=1$ and the value function $Q$ will be equivalent to the objective function in (\ref{equsOpt}).   
Given the value function in (\ref{Qfunction}), the optimal policy $\pi$ to maximize the expected cumulative rewards, is given as 
\begin{equation}\label{optLocation}
	\boldsymbol{x}(t+K) = \arg \max_{\boldsymbol{x}}  Q_t(\boldsymbol{h}(t),\boldsymbol{x},\boldsymbol{y}(t),\omega(t)), 
\end{equation}  
which shows the optimal location where the UAV-IR should be deployed during the next communication stage, based on the current environment information. 

Once the UAV-IR arrives at $\boldsymbol{x}(t+K)$, the UE's location $(\boldsymbol{y}(t+K),\omega(t+K))$ and its movement pattern  $f(\cdot|\boldsymbol{Y}_{t+K})$ will be updated. 
Then, the UAV-IR can measure the CSI $\boldsymbol{h}(t+K)$, and set the reflection parameter $\Theta^{*}(t+K)$ according to Proposition \ref{prop1}. Consequently, a reward $r(t+K)$ will be given to the UAV-IR. 
The new information about the environment can be used to update the value function, based on the Q-learning algorithm, via  
\begin{equation}
	\begin{aligned}
		Q_{t+K} & ( \boldsymbol{h}(t), \boldsymbol{x}(t+K), \boldsymbol{y}(t), \omega(t) ) =   \\
		& (1-\beta) ~ Q_t (\boldsymbol{h}(t), \boldsymbol{x}(t+K), \boldsymbol{y}(t), \omega(t)) +   \beta ~   [ r(t+K) \\
		& + \gamma \max_{\boldsymbol{x}} Q_{t}(\boldsymbol{h}(t +K), \boldsymbol{x}, \boldsymbol{y}(t+K), \omega(t+K) ) ],  
	\end{aligned}
\end{equation}
where $\beta \in(0,1] $ is the learning rate.

However, given an infinite value space of $(\boldsymbol{h}, \boldsymbol{x}, \boldsymbol{y}, \omega)$, it is impossible to define the value function $Q$ for any possible tuple by measurement. 
Therefore, our goal is to find a  parametrized value function $\tilde{Q}(\cdot|\boldsymbol{\phi})$ to approximate the real value function 
$Q(\cdot)$.  
In order to find a good-enough approximation, a deep neural network, based on the long short-term memory architecture \cite{chen2019artificial}, can be applied to train the parameter $\boldsymbol{\phi}$.  
Note that, it is possible for the Q-learning algorithm to converge to a local optimal value function $Q(\cdot|\phi^*)$, if the service time $T$ of the UAV-IR is long enough and the UE's movement pattern is properly modeled \cite{sutton2018reinforcement}.  
The deep RL algorithm for the efficient deployment of the UAV-IR  is summarized in Algorithm \ref{algo}.

\begin{algorithm}[h] \small   
	\caption{Deep RL Algorithm for the UAV-IR deployment} \label{algo}
	\begin{algorithmic}
		\State \textbf{Initialize} the location $\boldsymbol{x}_0$, the onboard energy $E_0$,  the value \\
		\quad  function $\tilde{Q}(\boldsymbol{h}, \boldsymbol{x}, \boldsymbol{y}, \omega|\boldsymbol{\phi}_0)$, and the prediction   model  $f_0(\boldsymbol{y},\omega)$. \\
		\textbf{For} $t = 0, \cdots, T$\\
		\quad A. If $v_r(t)>0$, UAV continues moving; \\
		\quad B .If $v_r(t)=0$, \\ 
		\quad \quad 1. Update the UE's location $(\boldsymbol{y}(t), \omega(t))$,  the prediction   model \\
		\quad  \quad \quad  $f(\boldsymbol{y},\omega|\boldsymbol{Y}_{t})$,  and measure the CSI $\boldsymbol{h}(t)$. \\ 
		\quad \quad 2. Based on Proposition \ref{prop1}, optimize the reflection parameter  \\
		\quad  \quad \quad   
		by	$\Theta(t)=\arg \max_{\Theta} c(\boldsymbol{x}(t),\boldsymbol{y}(t),\omega(t),\Theta)$.   \\ 
		\quad \quad 3. Receive the reward $r(t)$. If $\eta(t)\ge \tau$, hover and $\boldsymbol{x}(t+1) = $  \\
		\quad  \quad \quad    
		$\boldsymbol{x}(t)$; Otherwise, UAV will move to \\ 
		\quad \quad \quad $ \boldsymbol{x}(t+K)= \arg \max_{\boldsymbol{x}}  \tilde{Q} (\boldsymbol{h}(t),\boldsymbol{x},\boldsymbol{y}(t),\omega(t) | \boldsymbol{\phi}(t) )$.   \\ 
		\quad \quad 4. If $E(t)<\epsilon$ or $\eta(t)\ge \tau$,   $\tilde{\boldsymbol{x}} = \boldsymbol{x}(t)$ and $q(t) = r(t)$; \\  
		\quad \quad \quad Otherwise, set $\tilde{\boldsymbol{x}} = \boldsymbol{x}(t+K)$, and\\
		\quad \quad \quad \quad $q(t) = r(t) + \gamma \max_{\boldsymbol{x}} \tilde{Q}(\boldsymbol{h}(t),\boldsymbol{x},\boldsymbol{y}(t),\omega(t)|\boldsymbol{\phi}(t))$. \\ 
		\quad \quad 5. Train the parameter $\boldsymbol{\phi}$ of the value function $\tilde{Q}$  using  a    \\
		\quad \quad \quad neural network,  to minimize the loss function: \\ 
		\quad \quad \quad $ \boldsymbol{\phi}(t+1) = \arg \min_{\boldsymbol{\phi}} [q(t) - \tilde{Q}(\boldsymbol{h}(t),\tilde{\boldsymbol{x}},\boldsymbol{y}(t),\omega(t)|\boldsymbol{\phi})]^2$. \\ 
		\quad \quad 6. Update the value function to be $\tilde{Q}(\cdot|\boldsymbol{\phi}(t+1))$. \\
		\quad C. Update onboard energy $E(t+1) = E(t) -p(\Theta(t),v_r(t))\Delta T$. \\	
		\textbf{Until} $E(t) < \epsilon$. 
	\end{algorithmic}
\end{algorithm}

\section{Simulation Results and Analysis}\label{simulation}

\begin{figure}[!t]
	\begin{center}
		\vspace{-0.5cm}
		\includegraphics[width= 9.3cm]{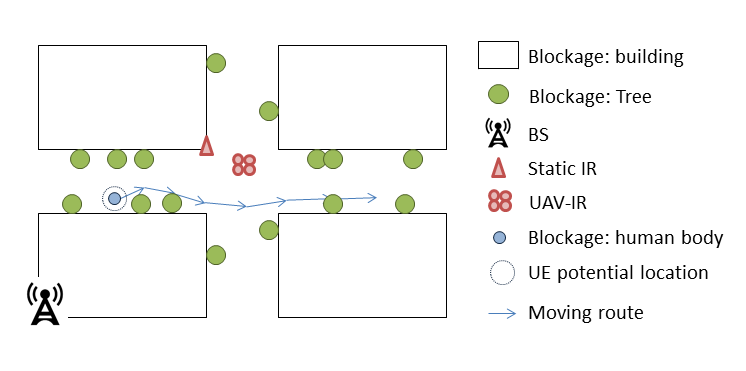}
		\vspace{-0.7cm}
		\caption{\label{simEnviron}\small Illustration of the considered deployment environment for our simulations.}
	\end{center}\vspace{-0.5cm}
\end{figure}

For our simulations, we consider a 3D urban environment that consists of two streets, four buildings, and trees that can obstruct mmW signals. 
As shown in Fig. \ref{simEnviron}, each building is of dimensions $40$ m $\times$ $40$ m $\times$ $16$ m, 
the street width is $15$ m, and each tree, distributed randomly on the street side, has a crown radius of $2$ m.       
The UE swings back and forth with respect to a human user, whose moving route is modeled by an MDP with a random destination.  
The BS-UE channel is blocked persistently by buildings. 
The path loss of mmW transmissions is based on the UMi-Street Canyon model \cite{5GCM}, and the mmW multi-input-multi-output channel is based on  \cite{ratnam2019continuous}.    
The carrier frequency is set to be $30$ GHz, the BS transmit power is $40$ dBm, with $b$ $=$ $ 0.1$ GHz, $N=16$, $M=64$, $V_r^{\text{max}} = 20$ m/s, $V_e = 1$ m/s, $a = 0.8$, $\kappa$ $=$ $60\%$, $\Delta T = 0.1$ s, $\tau = 5$ dB, and $\gamma =0.1$. 
The planar antenna array with $4 \times 4$ IR components and $8 \times 8$ antennas are equipped on the BS and the UAV-IR, respectively.  
For simplicity,  in our simulations, we assume a constant altitude for the UAV-IR. 
In order to examine the efficiency of the UAV-IR with an RL-based deployment, two other IRs are  introduced in our simulations: a static IR placed on the top of a building near the crossing with a height of $16$ m, and a UAV-IR without RL, which moves towards the UE's latest location, every time downlink blockage occurs.   

\begin{figure}[!t]
	\begin{center} 
		\includegraphics[width= 9.3cm]{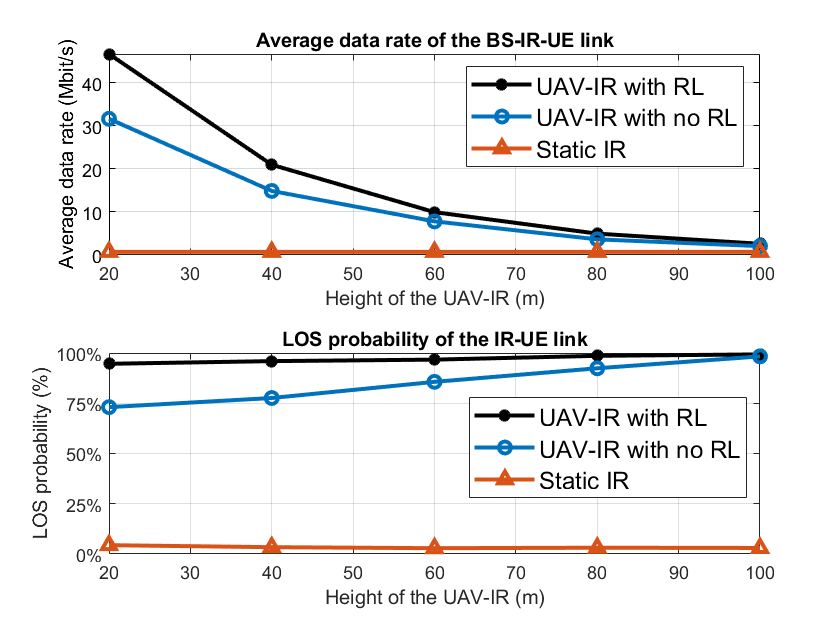}
		\vspace{-0.5cm}
		\caption{\label{height}\small  Average data rate per time slot of the BS-IR-UE link and the average LOS probability of the  IR-UE link.}
	\end{center}\vspace{-0.6cm}
\end{figure}

Fig. \ref{height} shows the average data rate of the BS-IR-UE link, as the UAV's height increases from $20$ to $100$ m. 
At an altitude of $20$ m, for both scenarios (with or without RL), the UAV-IRs yield a significant improvement in average data rate, compared to the static case.   
However, as the altitude increases, the performance of  UAV-IR decreases due to the increasing path loss at the BS-IR-UE channels.  
At a higher elevation, although it is more likely to have a LOS link,  the mmW channel attenuation becomes more severe as the BS-UAV and UAV-UE distances increase.   
As a result, when the height exceeds $100$ m, the performance of  UAV-IR is similar to the static case. 
Compared with the RL-based deployment, the lower data rate of the UAV-IR with no RL is caused by the frequent movement, due to its short-sighted deployment strategy.   
From Fig. \ref{height}, we can also see that a static IR   results in a LOS probability lower than $5\%$.  
Owing to mobility, the UAV-IR with the RL-based deployment yields a LOS probability over $90\%$, while the UAV-IR with no RL shows a probability above $70\%$.  
The performance improvement of the learning-based UAV-IR over the UAV with no RL is due to the prediction of the UE's movement and the estimation of each location over a long term.  

\begin{figure}[!t] 
	\begin{center}
		\vspace{-0.2cm}
		\includegraphics[width= 9.3cm]{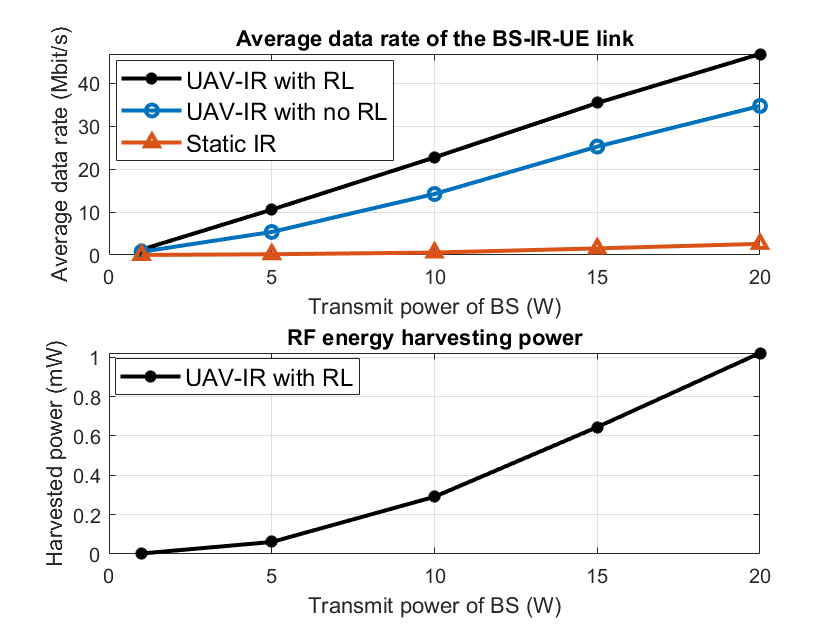}
		\vspace{-0.4cm}
		\caption{\label{power}\small Average data rate of the BS-IR-UE link and the RF energy harvesting power, with a fixed UAV height at $40$ m.  }
	\end{center}\vspace{-0.6cm}
\end{figure}

Fig. \ref{power} shows that the average data rate of the BS-IR-UE link,  at a UAV height of $40$ m, as the BS transmit power increases from $1$ to $20$ W.   
For a lower transmit power, the downlink SNR of the UAV-reflected channel will be smaller than the threshold $\tau$, which results in a zero data rate.  
In this case, although the downlink channel states of UAVs are better than the static IR, the performance of the three cases is similar. 
As the BS transmit power increases, the data rates increases in all scenarios.  
However, the downlink rates of UAV-IRs increase much faster than the static IR. Moreover, due to a higher LOS probability, the data rate of the RL approach increases faster than the UAV-IR with no RL.  
From Fig. \ref{power}, we can also see that, as the BS transmit power increases, the RF energy harvested by the UAV-IR increases up to $1$ mW, which is adequate for  IR to self-power without drawing any energy from the UAV.  

%\vspace{-0.1cm}
\section{Conclusion}\label{conclusion}%\vspace{-0.1cm}
In this paper, we have proposed a novel RL-based approach to enable an efficient deployment of the UAV-IR in serving the downlink mmW transmission to a mobile outdoor UE, with energy harvesting in the IR. 
To maintain a downlink LOS channel, the propagation environment is modeled, via Q-learning and neural networks, to optimize the location and reflection coefficient of the UAV-IR, such that the downlink communication capacity can be  maximized.  
Simulation results show  a significant advantage of using a UAV-IR compared to a static IR, in terms of the average data rate and the achievable downlink LOS probability.   
The results also show that the RL-based deployment of the UAV-IR further improves the network performance, relative to a UAV-IR without learning.    
Moreover, the RF energy harvested from mmW beamforming signals is shown to be sufficient to power the IR for signal reflections.

\bibliographystyle{IEEEtran}
\bibliography{references}

\end{document}